\documentclass[runningheads]{llncs}
\usepackage[T1]{fontenc}
\usepackage{cite}
\usepackage{graphicx}
\usepackage{booktabs}
\usepackage{hyperref}
\hypersetup{
  colorlinks = true,
  citecolor = blue,
  linkcolor = blue,
  urlcolor = blue
}
\usepackage{orcidlink}
\usepackage{color}

\usepackage{amsmath,amssymb}
\usepackage{cleveref}

\usepackage{tikz,tkz-euclide}
\usetikzlibrary{arrows,calc,patterns}
\usepackage{pgfplots}
\newcommand{\lab}[1]{}

\usepackage{subcaption}
\usepackage{wrapfig}

\begin{document}
\title{Toward Satisfiability Modulo Realizability}
\author{Andrew Krapivin \orcidlink{0009-0003-0227-7660} \and
Benjamin Przybocki \orcidlink{0009-0007-5489-1733} \and
Marijn J. H. Heule \orcidlink{0000-0002-5587-8801}}
\authorrunning{A. Krapivin et al.}
\institute{Carnegie Mellon University, Pittsburgh, USA\\
\email{\{akrapivi,bprzyboc,mheule\}@andrew.cmu.edu}}
\maketitle
\begin{abstract}
Problems complete for the existential theory of the reals ($\exists \mathbb{R}$) arise throughout discrete geometry. We introduce \emph{satisfiability modulo realizability}, a SAT-based approach for solving satisfiable instances of $\exists \mathbb{R}$ whose solutions correspond to realizable geometric configurations. Our method encodes an underapproximation of a geometric problem as a SAT instance over abstract order types. Since almost all abstract order types are unrealizable, naive search is infeasible. We guide the search toward realizable order types using diversity-driven sampling, partial realizability feedback, and a novel flippability heuristic that passes only limited information between components. We apply our method to discrete geometry problems and resolve an open problem by showing that the largest set of points avoiding empty convex hexagons and convex heptagons is of size 23.

\keywords{Discrete geometry \and SAT solving \and Realizability.}
\end{abstract}
\section{Introduction}

The success of SAT solving has inspired the motto that $\mathsf{NP}$ is the new $\mathsf{P}$, but the same cannot yet be said for classes beyond $\mathsf{NP}$. In this paper, we focus on problems that are complete for the class $\exists \mathbb{R}$ (pronounced ``existential theory of the reals''), which is a continuous analogue of $\mathsf{NP}$ that often arises in areas including discrete geometry, game theory, and computer algebra \cite{existential-reals}.

We develop a SAT-based program, which we call \textsf{PointSAT}, for finding solutions to problems in discrete geometry, where solutions consist of sets of points in $\mathbb{R}^2$ satisfying combinatorial constraints.\footnote{\textsf{PointSAT} is available at \url{https://github.com/andrewkrapivin/PointSAT}.} We evaluate \textsf{PointSAT} by applying it to variants of the happy ending problem posed by Esther Klein and studied by Erd\H{o}s and Szekeres~\cite{es}, a problem so named because it led to Klein and Szekeres getting married. Given a set of points in $\mathbb{R}^2$ in \emph{general position} (i.e., no three points are collinear), we say that a \emph{$k$-gon} is a subset of $k$ points in convex position, and a \emph{$k$-hole} is a $k$-gon with no points in its interior. Using \textsf{PointSAT}, we obtained the following answer to a question of Heule and Scheucher~\cite{h6-upper-bound}, who proved the upper bound, with \Cref{fig:23} showing a witnessing point set:
\begin{theorem} \label{thm:23}
    The largest point-set in $\mathbb{R}^2$ with no 6-hole or 7-gon is of size 23.
\end{theorem}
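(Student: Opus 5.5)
The statement splits into an upper bound and a lower bound, and I will treat them separately. For the upper bound I will simply invoke the result of Heule and Scheucher~\cite{h6-upper-bound}. They showed, via a SAT encoding over signotopes (abstract order types), that every set of 24 points in general position in $\mathbb{R}^2$ contains a 6-hole or a 7-gon. Their result is proved at the level of abstract order types. Every realizable point set induces such an order type, and the encoding of holes and gons in terms of triple orientations is faithful to the geometry. So the bound transfers to actual point sets, and no point set of size $\ge 24$ avoids both structures.

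The lower bound requires exhibiting 23 concrete points in general position with no 6-hole and no 7-gon. This witness is the one in \Cref{fig:23}. I would produce it with \textsf{PointSAT} as follows.

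\textbf{Step 1: encode the problem as SAT.} Use one variable per orientation of each triple of the 23 points, ordered by $x$-coordinate. Add signotope (3-term) axioms so that any assignment is an abstract order type. Add clauses forbidding every 6-hole and every 7-gon; these can be expressed through the standard triple-orientation characterisation of convexity and emptiness. To shrink the space, also impose symmetry breaking and possibly a prescribed convex-layer structure.

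\textbf{Step 2: search for a realizable order type.} Enumerate satisfying order types. For each one, attempt a realization by numerical optimisation, for example by perturbing coordinates to satisfy all the strict orientation inequalities. Steer the search using the guidance heuristics described in the abstract (diversity-driven sampling, partial realizability feedback, and the flippability heuristic), since a naive search would almost always return unrealizable order types.

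\textbf{Step 3: certify the witness.} Once coordinates are found, convert them to rationals or integers. Then verify with exact arithmetic that no three points are collinear, that the induced order type matches the SAT solution, and, independently, by brute force over all 6- and 7-subsets, that no 6-hole and no 7-gon exists. This last check makes the lower bound rigorous regardless of how the points were found.

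The main obstacle is Step 2: finding a \emph{realizable} satisfying order type. Almost all abstract order types are unrealizable, and the solutions at $n=23$ are extremal, hence tightly constrained. If the direct approach fails, I would first seed the search from known near-extremal configurations, such as the 22-point or 29-point examples for related hole problems, and then use partial realizability feedback. Concretely, whenever a realization attempt fails, add blocking clauses for the minimal unrealizable sub-configuration found, and iterate until a realizable order type emerges.
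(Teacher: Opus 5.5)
Your proposal is correct and follows the paper's argument: the upper bound $h(6,7)\le 24$ is cited from Heule and Scheucher (it transfers to point sets because it holds for all abstract order types), and the lower bound rests on the explicit 23-point set of \Cref{fig:23}, which can be certified directly by the exact brute-force check in your Step~3. One correction to your search narrative, which does not affect validity: in the paper, ``partial realizability feedback'' means testing whether \textsf{Localizer}'s best imperfect point set already avoids 6-holes and 7-gons, and most witnesses found this way disagree with the abstract solution they were derived from, so you should not require that match. The paper also deliberately avoids adding blocking clauses, which would in any case be unjustified, since a failed \textsf{Localizer} run certifies nothing about unrealizability.
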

In addition to proving \Cref{thm:23}, we demonstrate the versatility of \textsf{PointSAT} by using it to find constructions for two related problems. 

\begin{figure}[t]
    \centering
    \begin{tikzpicture}[dot/.style={inner sep=1pt, circle, minimum size=1.5mm, thick, scale=1.0}]
    \begin{axis}[
      unit vector ratio=1 1 1,
      xmin=-10,xmax=682,
      ymin=-10,ymax=576,
      width=0.8\textwidth,
      axis lines=none,
      xticklabel=\empty,
      yticklabel=\empty ]

\node[dot, fill=blue!0!red] (1) at (axis cs:389, 0) {\lab{1}};
\node[dot, fill=blue!0!red] (2) at (axis cs:672, 566) {\lab{2}};
\node[dot, fill=blue!20!red] (3) at (axis cs:648, 541) {\lab{3}};
\node[dot, fill=blue!40!red] (4) at (axis cs:579, 426) {\lab{4}};
\node[dot, fill=blue!60!red] (5) at (axis cs:476, 398) {\lab{5}};
\node[dot, fill=blue!60!red] (6) at (axis cs:396, 409) {\lab{6}};
\node[dot, fill=blue!80!red] (7) at (axis cs:379, 257) {\lab{7}};
\node[dot, fill=blue!100!red] (8) at (axis cs:348, 289) {\lab{8}};
\node[dot, fill=blue!40!red] (9) at (axis cs:325, 451) {\lab{9}};
\node[dot, fill=blue!20!red] (10) at (axis cs:379, 68) {\lab{10}};
\node[dot, fill=blue!80!red] (11) at (axis cs:335, 364) {\lab{11}};
\node[dot, fill=blue!60!red] (12) at (axis cs:367, 120) {\lab{12}};
\node[dot, fill=blue!80!red] (13) at (axis cs:344, 227) {\lab{13}};
\node[dot, fill=blue!40!red] (14) at (axis cs:358, 107) {\lab{14}};
\node[dot, fill=blue!20!red] (15) at (axis cs:359, 93) {\lab{15}};
\node[dot, fill=blue!80!red] (16) at (axis cs:268, 325) {\lab{16}};
\node[dot, fill=blue!80!red] (17) at (axis cs:263, 338) {\lab{17}};
\node[dot, fill=blue!60!red] (18) at (axis cs:317, 192) {\lab{18}};
\node[dot, fill=blue!60!red] (19) at (axis cs:234, 412) {\lab{19}};
\node[dot, fill=blue!60!red] (20) at (axis cs:174, 408) {\lab{20}};
\node[dot, fill=blue!40!red] (21) at (axis cs:143, 409) {\lab{21}};
\node[dot, fill=blue!20!red] (22) at (axis cs:94, 479) {\lab{22}};
\node[dot, fill=blue!0!red] (23) at (axis cs:0, 501) {\lab{23}};

    \end{axis}

    \node (coordlist) at (current axis.east)
        [anchor=west,
         xshift=1cm,
         font=\small]
    {
        \begin{tabular}{@{}l@{}}
        (\phantom{00}0, 501) \\
        (\phantom{0}94, 479) \\
        (143, 409) \\
        (174, 408) \\
        (234, 412) \\
        (263, 338) \\
        (268, 325) \\
        (317, 192) \\
        (325, 451) \\
        (335, 364) \\
        (344, 227) \\
        (348, 289) \\
        (358, 107) \\
        (359, \phantom{0}93) \\
        (367, 120) \\
        (379, \phantom{0}68) \\
        (379, 257) \\
        (389, \phantom{00}0) \\
        (396, 409) \\
        (476, 398) \\
        (579, 426) \\
        (648, 541) \\
        (672, 566)
        \end{tabular}
    };
  \end{tikzpicture}
    \caption{A set of 23 points with no 6-hole or 7-gon}
    \label{fig:23}
\end{figure}

At a high level, our approach involves encoding an underapproximation of a given geometric problem into SAT, which means that every solution to the underlying geometric problem corresponds to a satisfying assignment but not vice versa. A satisfying assignment is called an \emph{abstract order type}, and we say that an assignment corresponding to a geometric solution is \emph{realizable}. There are two challenges to overcome:
\begin{enumerate}
    \item Testing whether an abstract order type is realizable is computationally difficult (specifically, $\exists \mathbb{R}$-complete).
    \item Almost all abstract order types are not realizable.
\end{enumerate}

Subercaseaux, Mackey, Qian, and Heule \cite{symmetric} recently made significant progress on the first challenge by developing \textsf{Localizer}, a local-search solver for testing the realizability of abstract order types. Given an abstract order type as input, \textsf{Localizer} initializes a point configuration and iteratively perturbs the points using a simulated annealing strategy to minimize the number of violated orientation constraints. For realizable order types, \textsf{Localizer} succeeds in finding a realization with sufficiently high frequency to make it the current state-of-the-art tool for this task.

Our contribution is to the second challenge. If almost all abstract order types are not realizable, we need some way to guide our search toward the few that are. Indeed, while searching for a set of 23 points with no 6-hole or 7-gon, Heule and Scheucher~\cite{h6-upper-bound} report that none of the abstract order types they tested were realizable. \textsf{PointSAT} interfaces between a SAT solver and \textsf{Localizer} while incorporating heuristics we discovered that make finding realizable order types more likely. By analogy with solvers for satisfiability modulo theories (SMT), which interface between SAT solvers and theory solvers, we call our approach \emph{satisfiability modulo realizability}, and \textsf{PointSAT} can be seen as an SMT solver for a fragment of the existential theory of the reals, albeit one designed only for satisfiable instances.

Our paradigm differs from the previous SMT-style solvers in a few ways. First, rather than add unrealizability lemmas to the SAT instance, we instruct the SAT solver to generate a diverse set of candidate solutions to avoid getting stuck in problematic regions of the search space (see \Cref{sec-diverse}). Second, when the candidate solution is rejected by the ``theory solver'' (i.e., \textsf{Localizer}), the latter sends a nearby candidate solution to the SAT solver, which often turns out to be a solution (see \Cref{sec-partial-real}). Third, we use a novel ``flippability'' heuristic, in which only partial information from a candidate solution is passed from the SAT solver to \textsf{Localizer} (see \Cref{sec-omit-flip}). This insight is based on a surprising experimental observation about solutions to geometric problems of the sort \textsf{PointSAT} applies to. All three of these heuristics are essential for \textsf{PointSAT}'s performance and capabilities.

SAT solvers have been successfully applied to many problems in pure mathematics (see, e.g., \cite{discrepancy,pythag,schur,keller,kaplansky,packing}), and they are now an indispensable tool for computer-assisted mathematical discovery. On the other hand, since SAT solvers are not expressive enough for many mathematical problems, further progress in computer-assisted mathematics could be spurred by efficient solvers for more expressive theories. By using an SMT-style solver to resolve an open problem in pure mathematics, our work demonstrates the potential of this direction.

The outline of this paper is as follows. In \Cref{sec-background}, we provide context for the problems we study and discuss related work. In \Cref{sec-order-types}, we give the necessary background on the concepts from discrete geometry we employ. In \Cref{sec-approach}, we describe how \textsf{PointSAT} works and discuss the heuristics we discovered that are essential to overcoming the second challenge mentioned above. In \Cref{sec-evaluation}, we evaluate the effect these heuristics have on the performance of the solver and describe how we found the point set depicted in \Cref{fig:23}. In \Cref{sec-results}, we apply \textsf{PointSAT} to three other problems in discrete geometry and present comparative data from these experiments. Finally, in \Cref{sec-conclusion}, we conclude and discuss potential future work.

\section{The Happy Ending Problem and Its Variants} \label{sec-background}

In one of the first papers on Ramsey theory, Erd\H{o}s and Szekeres~\cite{es} proved that for every $k \in \mathbb{N}$, there is a $g(k) \in \mathbb{N}$ such that every set of at least $g(k)$ points in $\mathbb{R}^2$ in general position contains a $k$-gon. In fact, they conjectured that $g(k) = 2^{k-2} + 1$; it is known that $g(k) \ge 2^{k-2} + 1$~\cite{es2}. This conjecture remains unproven, but it is known to be true for $k \le 6$; the case $k=6$ was proven using a computer by Szekeres and Peters~\cite{sp}, and SAT solvers can now prove it in seconds~\cite{h6-upper-bound}.

A variant of the problem posed by Erd\H{o}s~\cite{erdos-empty} asks about avoiding $k$-holes. Let $h(k)$ be the minimum integer such that every set of at least $h(k)$ points in general position contains a $k$-hole, or let $h(k) = \infty$ if there is no such integer. The values of $h(k)$ are now fully determined: $h(3) = 3$, $h(4) = 5$, $h(5) = 10$~\cite{h5}, $h(6) = 30$~\cite{h6-lower-bound,h6-upper-bound}, and $h(7) = \infty$~\cite{h7}. We can also ask about simultaneously avoiding $k$-holes and $\ell$-gons. Let $h(k,\ell)$ be the minimum integer such that every set of at least $h(k,\ell)$ points in general position contains a $k$-hole or an $\ell$-gon. Previous constructions show that $h(5,6) = h(5) = 10$ and $h(6,8) = h(6) = 30$, which leaves $h(6,7)$ as the only nontrivial remaining case for $k \le 6$. Heule and Scheucher~\cite{h6-upper-bound} proved that $h(6,7) \le 24$. \Cref{thm:23}, proven using \textsf{PointSAT}, states that $h(6,7) = 24$.

\begin{wrapfigure}[10]{r}{0.4\textwidth}
    \centering
    \begin{tikzpicture}[
    scale=0.67,
    xscale=0.8,
    yscale=0.7,
    cap path/.style={thick, blue, join=round},
    extended line/.style={dashed, gray, thick},
    drop line/.style={dotted, red, thick, ->, >=stealth},
    point/.style={circle, fill=black, inner sep=1.5pt}
]

    \coordinate (p1) at (0, 0);
    \coordinate (p2) at (2, 2);
    \coordinate (p3) at (4, 3);
    \coordinate (p4) at (6, 2.5);
    \coordinate (p5) at (8, 1);

    \draw[cap path] (p1) -- (p2) -- (p3) -- (p4) -- (p5);

    \node[point, label=below:$p_1$] at (p1) {};
    \node[point, label=below right:$p_2$] at (p2) {};
    \node[point, label=below:$p_3$] at (p3) {};
    \node[point, label=below:$p_4$] at (p4) {};
    \node[point, label=below:$p_5$] at (p5) {};

\end{tikzpicture}
    \caption{A 5-cap}
    \label{fig:placeholder}
\end{wrapfigure}

Let a \emph{$k$-cap} be a sequence of points $p_1,\dots,p_k$ with increasing $x$-coordinates such that $p_{i+2}$ lies below the line from $p_i$ to $p_{i+1}$ for all $i \in [k-2]$. \textsf{PointSAT} can find a set of 26 points with no 5-cap or 7-gon. The existence of such a set follows from a theorem of Erd\H{o}s, Tuza, and Valtr~\cite{etv}, although \textsf{PointSAT} works without relying on any problem-specific knowledge. \textsf{PointSAT} can also find a set of 29 points with no 6-hole, replicating a result of Overmars~\cite{h6-lower-bound}. Overmars wrote a dedicated program to search for such a set of points, which relies on a specialized algorithm for finding holes~\cite{empty-alg};\footnote{Overmars' program was publicly released, but it appears to now be lost unfortunately.} \textsf{PointSAT} uses no problem-specific algorithms.

\section{Order Types and Realizability} \label{sec-order-types}

Problems in discrete geometry are naturally formulated as satisfiability problems with constraints involving real numbers representing the coordinates of points. However, this formulation obscures the combinatorial content of the problems, and applying an off-the-shelf solver for the existential theory of the reals to such an encoding would be hopelessly inefficient. We instead adopt a different approach based on \emph{order types}~\cite{knuth}, which is standard when applying SAT solvers to problems in discrete geometry (see, e.g., \cite{h6-upper-bound,symmetric}).

We consider problems concerning points in $\mathbb{R}^2$ in general position. Recall that this means that no three points are collinear. Given such a set of $n$ points, we can rotate the set so that no two points share the same $x$-coordinate. Then, we can label the points from left to right with labels $1,\dots,n$. For each triple of points $(i,j,k)$ with $1 \le i < j < k \le n$, let $\sigma(i,j,k) = +$ if $i$, $j$, and $k$ are oriented counterclockwise and $\sigma(i,j,k) = -$ if they are oriented clockwise; since the points are in general position, $i$, $j$, and $k$ must be oriented either clockwise or counterclockwise. This function $\sigma$ is called the \emph{order type} of the set of points, and a constraint of the form $\sigma(i,j,k) = \pm$ is called an \emph{orientation constraint}. By considering order types rather than concrete sets of points in $\mathbb{R}^2$, we lose information about the exact positioning of the points, but we retain exactly the information we need to define the geometric notions we care about, like convexity.

Not every function $\sigma : \binom{[n]}{3} \to \{-,+\}$ corresponds to a geometric configuration of points;\footnote{By $\binom{[n]}{3}$, we mean the set of all 3-element subsets of $\{1,\dots,n\}$.} we say that such a function is a \emph{realizable order type} if it is induced by a set of points in $\mathbb{R}^2$. Given points $i$, $j$, $k$, and $\ell$ such that $1 \le i < j < k < \ell \le n$, it is not hard to see that the following constraints hold for any realizable order type:
\begin{align*}
    \sigma(i,j,k) = + \ \land \ \sigma(i,k,\ell) = + \quad &\rightarrow \quad \sigma(i,j,\ell) = + \\
    \sigma(i,j,k) = - \ \land \ \sigma(i,k,\ell) = - \quad &\rightarrow \quad \sigma(i,j,\ell) = - \\
    \sigma(i,j,k) = + \ \land \ \sigma(j,k,\ell) = + \quad &\rightarrow \quad \sigma(i,k,\ell) = + \\
    \sigma(i,j,k) = - \ \land \ \sigma(j,k,\ell) = - \quad &\rightarrow \quad \sigma(i,k,\ell) = -.
\end{align*}
We call these constraints the \emph{order type axioms},\footnote{They are also sometimes called the \emph{signotope axioms}~\cite{signotope}.} and we call a function $\sigma : \binom{[n]}{3} \to \{-,+\}$ satisfying these constraints an \emph{abstract order type}. The order type axioms can be naturally encoded into SAT using a boolean variable for each $\sigma(i,j,k)$ (called an \emph{orientation variable}), making them particularly suitable for our purposes. However, not every abstract order type is realizable, and it is $\exists \mathbb{R}$-complete to determine whether an abstract order type is realizable~\cite{mnev,shor}. Therefore, a compact SAT encoding of the exact constraints for realizable order types would imply that $\mathsf{NP} = \exists \mathbb{R}$, which is considered unlikely. We don't attempt to enforce any additional realizability constraints at the SAT level.\footnote{One justification for this is that the smallest unrealizable abstract order types have 9 points~\cite{realizable8}, and forbidding these sub-configurations would seem to require $O(n^9)$ clauses, which is quite expensive.}

The papers \cite{h6-upper-bound,symmetric} describe how various geometric constraints can be expressed in terms of abstract order types and encoded into SAT. Specifically, \cite{h6-upper-bound} shows how to encode avoiding $k$-holes, $\ell$-gons, and $k$-caps, and \cite{symmetric} shows how to enforce rotational symmetry and prescribe the sizes of the convex hull layers. For all of the solvable problems we consider in this paper, the SAT instances can be solved in minutes on a single core. The difficulty is in finding realizable solutions.

Despite the order type axioms being an underapproximation of the geometric constraints, the remarkable fact is that all of the theorems mentioned in \Cref{sec-background} remain true when they are instead interpreted as statements about abstract order types. On the other hand, almost all abstract order types are not realizable: There are $2^{\Theta(n^2)}$ abstract order types on $n$ points, but only $2^{\Theta(n \log n)}$ of them are realizable~\cite{nlogn}. In the context of a geometric problem that can be expressed as a constraint satisfaction problem on abstract order types, we say that an abstract order type that satisfies the constraints is an \emph{abstract solution}; if an abstract solution is realizable, we say it is a \emph{solution}.

\section{Description of \textsf{PointSAT}} \label{sec-approach}
In this section, we describe the operation of \textsf{PointSAT}. We begin with an outline of the methodology, and then we describe three techniques we employed that are essential for making the approach feasible and performant.

At a high level, the basic strategy is as follows:
\begin{enumerate}
    \item Encode the geometric problem into a SAT instance.
    \item Find many abstract solutions using a SAT solver. 
    \item For each abstract solution, use \textsf{Localizer} to search for a realization.
\end{enumerate}
This is effectively the same approach attempted by Heule and Scheucher~\cite{h6-upper-bound}, with the only difference being that they used a different tool for testing whether an abstract solution is realizable.\footnote{This tool was developed by Scheucher and is not publicly available.} Using this basic approach, we tried and failed to realize over 37,000 abstract solutions, replicating Heule and Scheucher's attempt. It is unsurprising that this strategy fails, because almost all abstract solutions are not realizable. The three techniques we describe next are designed to address this problem.

As will be shown in \Cref{sec-evaluation}, not all of these techniques are necessary to find a solution to prove \Cref{thm:23}. However, the combination of all of our techniques dramatically improves \textsf{PointSAT}'s performance, which allowed us to find over a thousand solutions proving \Cref{thm:23}. We observed that many solutions found by \textsf{PointSAT} have points tightly clustered together, so much so that often only a few clusters of points can be visually distinguished.\footnote{There are theoretical results to the effect that, in the worst case, realizations of order types may require the points to be tightly clustered together~\cite{spread}.} Therefore, to find a solution like the one in \Cref{fig:23}, where every point can be individually distinguished, it was necessary to generate many solutions. The performance gains are also useful for applying \textsf{PointSAT} to harder problems.

\subsection{Generating Diverse Abstract Solutions} \label{sec-diverse}
A common approach for enumerating solutions to a SAT formula is to use a tool like \textsf{allsat-CaDiCaL}~\cite{allsat}. For SAT formulas with few satisfying assignments, such tools are an effective way to enumerate them all. However, for the problems we consider, the number of abstract solutions is too large to exhaustively enumerate. While we can use \textsf{allsat-CaDiCaL} to generate thousands or millions of abstract solutions, most of these abstract solutions will be very similar to each other (as measured by, e.g., Hamming distance). Thus, most of the generated abstract solutions will likely be unrealizable for the same reason.

To address this problem, we generate a \emph{diverse} set of abstract solutions. Previous researchers have considered the problem of finding a diverse set of satisfying assignments of a SAT formula~\cite{uniform-sampling,diverse-sat,exact-diverse-sat}, but we opt for a very simple approach that suffices for our purpose. To generate an abstract solution of a formula, we use the \textsf{scranfilize} tool~\cite{scranfilize} to randomly permute the clauses of the formula, and then we use a SAT solver (such as \textsf{Kissat}) to solve the resulting formula. The effect of permuting the clauses is that the SAT solver finds a different solution each time. The resulting solutions are diverse enough to avoid the above problem.

\subsection{Testing if Partial Realizations Are Solutions} \label{sec-partial-real}
When \textsf{Localizer} fails to find a realization of a given set of constraints, it outputs the best point set it found, meaning the point set with the fewest number of violated constraints. We call such a point set a \emph{partial realization}. Partial realizations sometimes end up being solutions anyway; that is, despite not agreeing with the abstract solution we were targeting, a partial realization may still satisfy all of the desired combinatorial constraints. Thus, when \textsf{Localizer} outputs a partial realization, we always check whether it is a solution. We do this by calculating the orientations from the partial realization, adding the corresponding unit clauses to the SAT formula, and using a SAT solver to check whether these orientations indeed satisfy the combinatorial constraints.

\subsection{Omitting Flippable Orientations} \label{sec-omit-flip}
Given a variable assignment $\alpha : X \to \{\bot,\top\}$ and a variable $x \in X$, let $\alpha_x$ be the assignment obtained by flipping the assignment to variable $x$; that is, $\alpha_x(x) = \overline{\alpha(x)}$ and $\alpha_x(y) = \alpha(y)$ for all $y \in X \setminus \{x\}$. Given a SAT formula $\varphi$ with variables $X$ and a satisfying assignment $\alpha$, we say that a variable $x \in X$ is \emph{flippable} if $\alpha_x$ is a satisfying assignment. While analyzing the structure of abstract solutions to geometric problems, we noticed that a remarkable number of orientation variables are flippable. For example, the abstract solutions of 23 points avoiding 6-holes and 7-gons that we found have about 35.9 flippable orientation variables on average (see \Cref{fig:flippables-23}).

Motivated by this observation, we came up with the following strategy. Instead of passing a complete abstract order type to \textsf{Localizer}, we omit all of the orientation constraints corresponding to flippable variables. Naturally, after dropping these constraints, the remaining constraints are more likely to be realizable, and \textsf{Localizer} is more likely to find a realization (or find a partial realization satisfying more constraints). The goal is to remove constraints that are less critical for being a solution. The trade-off is that if \textsf{Localizer} finds a realization, it may no longer be a solution; flipping the assignment to multiple flippable variables does not necessarily result in a satisfying assignment. But, perhaps surprisingly, this happens much less often than one may expect. For example, applying this strategy to the search for a set of 23 points avoiding 6-holes and 7-gons, we found 90 realizations satisfying the non-flippable orientations of some abstract solutions, and 81 of these were solutions (see \Cref{fig:prob-23}).

One possible explanation for why omitting flippable orientations works so well is the following. Since almost all abstract order types are non-realizable, it is almost certain that \textsf{Localizer} is going to violate some of the orientation constraints. But not all orientation constraints are of equal importance for ensuring that the partial realization is a solution: flippable orientations are less important than non-flippable orientations. By omitting the flippable orientations entirely, we incentivize \textsf{Localizer} to violate flippable orientations rather than non-flippable orientations, making the resulting partial realizations more likely to be solutions.

\subsection{Summary}
In summary, our revised strategy for finding a solution to a geometric problem is as follows:
\begin{enumerate}
    \item Encode the geometric problem into a SAT instance.
    \item Find a diverse set of abstract solutions by repeatedly applying \textsf{scranfilize} to the formula and solving the resulting scrambled formula using a SAT solver.
    \item For each abstract solution, identify the flippable orientations and remove them to create a partial abstract solution.
    \item For each partial abstract solution, run \textsf{Localizer} with some timeout and obtain a partial realization.
    \item For each partial realization, test whether it is a solution.
\end{enumerate}
\noindent
Step 1 is done by the user, and steps 2--5 are performed by \textsf{PointSAT}.

\section{Experiments for 23 Points With No 6-Gon or 7-Hole} \label{sec-evaluation}

\subsection{Evaluation}

In this section, we evaluate the effect of the heuristics discussed in \Cref{sec-approach}, using the problem of finding a set of 23 points avoiding 6-holes and 7-gons as a benchmark. In \Cref{sec-results}, we will demonstrate that our approach generalizes to other problems in discrete geometry as well.

To evaluate the effectiveness of our heuristics, we generated 6494 abstract solutions for this problem using \textsf{CaDiCaL}. 
Using these abstract solutions, we evaluated the impact of testing partial realizations (\Cref{sec-partial-real}) and omitting flippable orientations (\Cref{sec-omit-flip}), resulting in four different strategies depending on whether each heuristic is used.
In each case, \textsf{PointSAT} ran \textsf{Localizer} with a timeout of 15 seconds on a single thread for each abstract solution. The number of solutions obtained using each strategy is given in \Cref{tab:num-realizations-diff-settings}.

\begin{table}[ht]
    \centering
    \caption{The number of solutions found for each of the four strategies}
    \label{tab:num-realizations-diff-settings}
    \begin{tabular}{c@{~~~~~}c@{~~~~~}c}
    \toprule
          & Test partial realization & Do not test partial realization \\
          \midrule
          Omit flippable & {\bf 42} & 8\\
          Keep flippable   & 3 & 0\\
    \bottomrule
    \end{tabular}
\end{table}

Notice that without either of our heuristics, we do not find any solutions. Thus, the heuristics we developed were necessary for finding a set of 23 points with no 6-hole or 7-gon. Each heuristic from \Cref{sec-partial-real,sec-omit-flip} by itself allows us to find some solutions, but the results improve dramatically when they are combined.

In addition to these results, it is also insightful to consider the number of violated constraints \textsf{Localizer} was able to obtain both when flippables are kept and omitted. These data are shown in \Cref{fig:num-violations-compare}. As one can see, omitting flippable literals shifts the distribution of the number of violations to the left, which is to be expected since there are fewer constraints. Specifically, the mean number of violations is 20.1 when flippable orientations are kept and 9.7 when they are omitted.\footnote{When computing statistics like these, we remove data in the top percentile, because there are rare instances where \textsf{Localizer} outputs a point set with an exceptionally high number of violations.} This is part of the explanation for why the heuristic is effective. To be sure, when flippable orientations are omitted, many of these will be implicitly violated, which is not reflected in these counts and may contribute to a partial realization's failure to be a solution. Evidently, however, the trade-off works out such that the heuristic is effective, which can be attributed to the fact that satisfying the flippable orientations is less important than the non-flippable orientations.

\begin{figure}[ht]
    \centering
    \begin{tikzpicture}
        \pgfplotsset{
            histogram style/.style={
                width=0.8\linewidth,
                height=0.5\linewidth,
                ybar interval,
                x tick label as interval=false,
                ymin=0,
                ymax=0.12,
                xmin=0,
                xmax=36,
                xtick={0, 4, 8, 12, 16, 20, 24, 28, 32, 36},
                tick label style={font=\footnotesize},
                label style={font=\small},
                grid style={line width=.1pt, draw=gray!20}
            }
        }

        \begin{axis}[
            histogram style,
            xlabel={Number of violations},
            ylabel={Frequency},
            axis x line*=bottom,
            axis y line=none,
            ymajorgrids=true,
            xmajorgrids=false,
            ytick={0, 0.02, 0.04, 0.06, 0.08, 0.1},
            yticklabel=\empty,
            legend style={at={(0.98,0.98)}, anchor=north east,legend columns=1,legend cell align=left, font=\footnotesize, fill=none}
        ]
        
        \addlegendimage{area legend, fill=red!50, draw=red!70!black, fill opacity=0.5}
        \addlegendentry{Flippables omitted}

        \addlegendimage{area legend, fill=blue!50, draw=blue!70!black, fill opacity=0.5}
        \addlegendentry{Flippables kept}
        
        \addplot[fill=red!50, draw=red!70!black, fill opacity=0.5] coordinates {
            (0.0, 0.001397) (1.0, 0.006520) (2.0, 0.015368) (3.0, 0.030115) (4.0, 0.043310) 
            (5.0, 0.059919) (6.0, 0.081031) (7.0, 0.083514) (8.0, 0.091742) (9.0, 0.101366) 
            (10.0, 0.098572) (11.0, 0.075442) (12.0, 0.067370) (13.0, 0.059609) (14.0, 0.048432) 
            (15.0, 0.037100) (16.0, 0.030115) (17.0, 0.024216) (18.0, 0.017231) (19.0, 0.009624) 
            (20.0, 0.008382) (21.0, 0.006520) (22.0, 0.003105) (23.0, 0.002464) (24.0, 0.001232) 
            (25.0, 0.000924) (26.0, 0.000770) (27.0, 0.001232) (28.0, 0.000000) (29.0, 0.000000) 
            (30.0, 0.000308) (31.0, 0.000154) (32.0, 0.000154) (33.0, 0.000000) (34.0, 0.000000) 
            (35.0, 0.000000) (36.0, 0.000154) (37.0, 0.000000)
        };
        \end{axis}

        \begin{axis}[
            histogram style,
            axis lines=none,
            ticks=none,
            grid=none
        ]
        
        \addplot[fill=blue!50, draw=blue!70!black, fill opacity=0.5] coordinates {
            (4.0, 0.000310) (5.0, 0.000621) (6.0, 0.000776) (7.0, 0.001708) (8.0, 0.006675) 
            (9.0, 0.007917) (10.0, 0.011953) (11.0, 0.016610) (12.0, 0.024371) (13.0, 0.035548) 
            (14.0, 0.042068) (15.0, 0.055418) (16.0, 0.065663) (17.0, 0.075132) (18.0, 0.067215) 
            (19.0, 0.070475) (20.0, 0.073890) (21.0, 0.065042) (22.0, 0.067060) (23.0, 0.056659) 
            (24.0, 0.052313) (25.0, 0.042999) (26.0, 0.036014) (27.0, 0.026700) (28.0, 0.023440) 
            (29.0, 0.019249) (30.0, 0.013971) (31.0, 0.010866) (32.0, 0.009003) (33.0, 0.008693) 
            (34.0, 0.004657) (35.0, 0.004346) (36.0, 0.002639) (37.0, 0.000000)
        };
        \end{axis}

    \end{tikzpicture}
    \caption{Distribution of number of violations in the best partial realizations \textsf{Localizer} obtained both with and without flippable orientations}
    \label{fig:num-violations-compare}
\end{figure}

\subsection{Finding \Cref{fig:23}} \label{sec-finding}

Both for the sake of human-interpretability and aesthetics, we wanted to find a solution in which each of the points can be individually distinguished. Doing this required finding many solutions, selecting the best one, and applying post-processing to the point set.

We applied \textsf{PointSAT} to this problem by running it on the Pittsburgh Supercomputing Center~\cite{bridges2} on a node with 128 cores for a total of 2326 core hours. It generated 174,450 abstract solutions, from which it found 1035 solutions. For each abstract solution, \textsf{PointSAT} ran \textsf{Localizer} on two threads with a 15-second timeout.

We visually inspected each solution and found only one solution for which each point was clearly individually distinguishable. To simplify the representation of the solution, we wanted to transform the solution into one for which each coordinate is a small integer.\footnote{Minimizing the size of the integer grid into which a solution can be embedded is a task that has also been studied in the context of similar problems~\cite{horton-small,es-small}.} We started by dilating the point set by various factors and rounding the coordinates to integers, choosing the smallest dilation factor for which the resulting point set was still a solution. Then, we ran a script that uses simulating annealing to perturb the points with the goal of minimizing the smallest bounding box containing the points. Once this reached a local optimum, we ran a separate script that uses simulated annealing to minimize the ratio of the areas of the largest to smallest triangles. Finally, we ran the former script again, and the result was a solution that fits within the integer grid $\{0,\dots,672\} \times \{0,\dots,566\}$. This is the solution depicted in \Cref{fig:23}.

\subsection{Convex Hull Layers}

Among the 174,450 abstract solutions, there were 29 different possibilities for the convex hull layers, and among the 1035 solutions, there were 17 different possibilities (see \Cref{tab:convex-hull-layers}); note that the convex hull layers of an abstract order type are well defined \cite{knuth}. It is interesting that (3, 4, 4, 6, 5, 1) is by far the most common; the point set in \Cref{fig:23} is of this type. Every abstract solution we found has 3 points in its convex hull. It turns out that this is necessary:
\begin{theorem}
    Every abstract order type of 23 points with no 6-hole or 7-gon has 3 points in its convex hull.
\end{theorem}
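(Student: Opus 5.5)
The plan is a computer-assisted proof, in the same spirit as the SAT-based proof of $h(6,7) \le 24$ by Heule and Scheucher~\cite{h6-upper-bound}. Take their encoding of ``abstract order type on $n$ points with no 6-hole and no 7-gon'' for $n = 23$. Add clauses saying the convex hull has at least $4$ points, and show that the resulting formula is unsatisfiable.

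The hull size is at most $6$, since a hull of $7$ or more points already contains a $7$-gon. So it suffices to exclude hull sizes $4$, $5$ and $6$. A hull vertex $i$ in the left-to-right labeling is expressible directly in orientation variables: $i$ is extremal if some line through $i$ has all other points on one side. In the abstract setting, point $1$ and point $n$ are always hull vertices. An intermediate point $i$ is on the upper hull iff $\sigma(a,i,b) = -$ for all $a < i < b$, and on the lower hull iff $\sigma(a,i,b) = +$ for all $a < i < b$. I would introduce auxiliary variables $u_i$ and $\ell_i$ defined by these conjunctions. Then either add a cardinality constraint $\sum_i (u_i + \ell_i) \ge 2$ over $1 < i < n$, or split into the finitely many cases for how many hull points lie on the upper and lower chains.

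Symmetry breaking is where the work concentrates, and it should come next. To keep the instance solvable, I would reuse the symmetry breaking of~\cite{h6-upper-bound}: choose point $1$ to be a hull vertex and sort the remaining points by angle around it. This allows the no-6-hole and no-7-gon constraints to be encoded compactly, with the $O(n^4)$-style hole encoding. In that sorted representation, the hull condition becomes simpler. Points $2$ and $n$ are the hull neighbours of point $1$, and a point $i$ with $2 < i < n$ is a hull vertex iff $\sigma(a,i,b)$ has the appropriate sign for all $a < i < b$ with $a,b \ge 2$. So ``at least 4 hull points'' means at least one extra hull point among $3,\dots,n-1$, which is a single clause over auxiliary variables. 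This makes the instance a small strengthening of the formula that the authors already solve routinely.

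The main obstacle is proof size and solving time. The unrestricted formula at $n = 24$ was hard, and it needed cube-and-conquer in~\cite{h6-upper-bound}. The constrained $n = 23$ instance might be comparably hard, because the extra hull point removes the `triangle' structure in which solutions live. If a direct \textsf{CaDiCaL} run does not terminate, I would split on the position of the extra hull vertex (which index $i$) and on its neighbours' orientations, then solve the resulting cubes in parallel. I would verify the result with DRAT proofs checked by \textsf{dpr-trim} or \textsf{cake\_lpr}. I would also justify the encoding's correctness, namely that the angular-sort symmetry breaking is without loss of generality for abstract order types, which follows from the argument in~\cite{h6-upper-bound}.
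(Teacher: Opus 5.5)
Your proposal matches the paper's proof. Like the paper, you take the Heule--Scheucher encoding with its symmetry breaking $\sigma(1,i,j)=+$ for $2\le i<j\le 23$, add a single clause forcing a fourth hull point, and establish unsatisfiability by cube-and-conquer (the paper reports 1727 core hours, with every cube unsatisfiable); the only difference is that your auxiliary hull-vertex variables are unnecessary, since under this symmetry breaking the paper simply adds the clause $\bigvee_{i=3}^{22}\bigl(\sigma(2,i,23)=+\bigr)$, which says some point lies outside the triangle $1,2,23$.
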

\begin{proof}
    Our encoding enforces a symmetry breaking constraint that $\sigma(1,i,j) = +$ for all $2 \le i < j \le 23$ (see \cite{h6-upper-bound} for the justification of this constraint). Given this constraint, if there is a solution with at least 4 points in the convex hull, then there is a solution in which $\sigma(2,i,23) = +$ for some $i \in [3,22]$, which we can enforce with a single additional clause. We partitioned the resulting formula into cubes using the strategy from \cite{h6-upper-bound}; that is, we case split on all the possible assignments to a small set of orientation variables and solve the cases in parallel. The computation took 1727 core hours parallelized across 128 cores on the Pittsburgh Supercomputing Center, and each cube was unsatisfiable.
\end{proof}

\begin{table}[ht]
    \centering
    \caption{Frequency of convex hull layer sizes for 23 points with no 6-hole or 7-gon}
    \label{tab:convex-hull-layers}
    \begin{tabular}{@{~~~}l@{~~~}r@{~~~}r@{~~~~~~~~~~~~}l@{~~~}r@{~~~}r@{~~~}}
    \toprule
    Hull layer sizes & \# abs. & \# sol. & Hull layer sizes & \# abs. & \# sol. \\
    \midrule
    (3, 3, 3, 3, 6, 5) & 2 & 0       & (3, 3, 5, 6, 5, 1) & 8975 & 51 \\
    (3, 3, 3, 4, 4, 5, 1) & 73 & 0   & (3, 4, 3, 4, 6, 3) & 621 & 0 \\
    (3, 3, 3, 4, 5, 4, 1) & 280 & 0  & (3, 4, 3, 5, 4, 4) & 742 & 2 \\
    (3, 3, 3, 4, 5, 5) & 128 & 0     & (3, 4, 3, 5, 5, 3) & 3576 & 1 \\
    (3, 3, 3, 5, 4, 5) & 479 & 0     & (3, 4, 3, 5, 6, 2) & 661 & 2 \\
    (3, 3, 3, 5, 5, 4) & 475 & 0     & (3, 4, 3, 6, 4, 3) & 6551 & 1 \\
    (3, 3, 3, 6, 4, 4) & 5000 & 0    & (3, 4, 4, 4, 6, 2) & 5240 & 30 \\
    (3, 3, 3, 6, 6, 2) & 134 & 0     & (3, 4, 4, 5, 4, 3) & 1600 & 4 \\
    (3, 3, 4, 4, 5, 4) & 2 & 0       & (3, 4, 4, 5, 5, 2) & 12631 & 26 \\
    (3, 3, 4, 4, 6, 3) & 242 & 1     & (3, 4, 4, 6, 5, 1) & 72706 & 556 \\
    (3, 3, 4, 5, 4, 4) & 581 & 0     & (3, 4, 5, 5, 5, 1) & 3364 & 28 \\
    (3, 3, 4, 5, 5, 3) & 1369 & 0    & (3, 4, 5, 6, 5) & 13368 & 61 \\
    (3, 3, 4, 5, 6, 2) & 2359 & 11   & (3, 5, 5, 5, 4, 1) & 6023 & 63 \\
    (3, 3, 4, 6, 4, 3) & 19807 & 40  & (3, 5, 6, 6, 3) & 6891 & 157 \\
    (3, 3, 4, 6, 6, 1) & 570 & 1     & & & \\ 
    \bottomrule
    \end{tabular}
\end{table}

\section{Comparison of Experiments for Four Problems} \label{sec-results}
In addition to proving \Cref{thm:23}, we applied \textsf{PointSAT} to three other constructive problems in discrete geometry:
\begin{enumerate}
    \item the existence of 26 points with no 5-cap or 7-gon;
    \item the existence of 29 points with no 6-hole; and
    \item the existence of 32 points with no 7-gon.
\end{enumerate}
Note that these problems were known to have solutions prior to the present paper. For each problem, we ran \textsf{PointSAT} on the Pittsburgh Supercomputing Center on a node with 128 cores. We configured \textsf{PointSAT} to run \textsf{Localizer} on one thread with a 15-second timeout for each abstract solution. To fairly compare the performance of \textsf{PointSAT} on the four problems, we decided to run \textsf{PointSAT} again on the 23 point problem, both because our experiment in \Cref{sec-finding} ran \textsf{Localizer} on two threads instead of one and because we had made some changes to \textsf{PointSAT} in the meantime (most significantly, we changed how \textsf{PointSAT} generates diverse abstract solutions).\footnote{In an earlier prototype of \textsf{PointSAT}, we generated diverse abstract solutions by partitioning the formula into cubes using the strategy from \cite{h6-upper-bound}. We tried to solve each cube using a SAT solver and kept the satisfying assignments from the solvable cubes. We later found that we could generate diverse abstract solutions more efficiently using \textsf{scranfilize}.}

For the 23 point problem, we ran \textsf{PointSAT} for 1811 core hours, and it generated 200,000 abstract solutions, from which it found 423 solutions. For the 26 point problem, we ran \textsf{PointSAT} for 1193 core hours, and it generated 200,000 abstract solutions, from which it found 32 solutions. For the 29 point problem, we ran \textsf{PointSAT} for 3433 core hours, and it generated 18,806 abstract solutions, from which it found 4 solutions. For the 32 point problem, we ran \textsf{PointSAT} for 2191 core hours, and it generated 200,000 abstract solutions but did not find any solutions. As expected, the problems become more difficult as the number of points increases, since a smaller fraction of abstract solutions are realizable.

In comparison with the other problems, finding an abstract solution to the 29 point problem is abnormally difficult; it takes on average 640 seconds to find an abstract solution with \textsf{Kissat}, while the other problems can all be solved in under 30 seconds. This is why we only generated 18,806 abstract solutions for this problem. On average, it took \textsf{PointSAT} 858 core hours to find a solution to this problem, since we found 4 solutions with 3433 core hours. Overmars~\cite{h6-lower-bound} reports that his dedicated program, which is no longer available, could find a solution in about 4 days on a Pentium III 500 MHz, so \textsf{PointSAT} is not competitive with his program, although it is more generally applicable. This suggests that further improvements to \textsf{PointSAT} are possible.

\begin{figure}[ht]
    \centering
    \begin{subfigure}[b]{0.48\linewidth}
        \centering
        \begin{tikzpicture}
\begin{axis}[
    width=1.2\linewidth,
    height=0.8\linewidth,
    ybar interval, 
    x tick label as interval=false,
    ymin=0,
    ymax=0.09, 
    xmin=16,
    xmax=56,
    xtick={16,24,32,40,48,56},
    ytick=\empty,
    grid=major,
    axis lines*=none,
    axis y line=none,
    fill=blue!40,
    draw=blue!60!black
]
\addplot coordinates {
    (15, 0.000005) (16, 0.000015) (17, 0.000020) (18, 0.000035) (19, 0.000160) (20, 0.000330)
    (21, 0.000665) (22, 0.001240) (23, 0.002255) (24, 0.003850) (25, 0.006690) (26, 0.010261)
    (27, 0.015711) (28, 0.022882) (29, 0.031490) (30, 0.040795) (31, 0.049958) (32, 0.060577)
    (33, 0.068725) (34, 0.075510) (35, 0.079835) (36, 0.080245) (37, 0.078753) (38, 0.072050)
    (39, 0.065607) (40, 0.056776) (41, 0.046491) (42, 0.037245) (43, 0.027732) (44, 0.021089)
    (45, 0.015176) (46, 0.010091) (47, 0.007025) (48, 0.004330) (49, 0.002725) (50, 0.001645)
    (51, 0.000990) (52, 0.000470) (53, 0.000265) (54, 0.000145) (55, 0.000100) (56, 0.000020)
    (57, 0.000005) (58, 0.000005) (59, 0.000005) (60, 0.000005)
};
\end{axis}
\end{tikzpicture}
        \caption{23 point problem}
        \label{fig:flippables-23}
    \end{subfigure}
    \hfill
    \begin{subfigure}[b]{0.48\linewidth}
        \centering
        \begin{tikzpicture}
\begin{axis}[
    width=1.2\linewidth,
    height=0.8\linewidth,
    ybar interval, 
    x tick label as interval=false,
    ymin=0,
    ymax=0.09,
    xmin=16,
    xmax=48,
    xtick={16,24,32,40,48},
    ytick=\empty,
    grid=major,
    axis lines*=none,
    axis y line=none,
    fill=blue!40,
    draw=blue!60!black
]
\addplot coordinates {
    (12, 0.000010) (13, 0.000065) (14, 0.000105) (15, 0.000270) (16, 0.000625) (17, 0.001020) 
(18, 0.002250) (19, 0.003780) (20, 0.006175) (21, 0.009865) (22, 0.015410) (23, 0.022350) 
(24, 0.030745) (25, 0.040635) (26, 0.051345) (27, 0.062735) (28, 0.072570) (29, 0.078570) 
(30, 0.083120) (31, 0.084125) (32, 0.080880) (33, 0.074570) (34, 0.065575) (35, 0.055770) 
(36, 0.045495) (37, 0.034800) (38, 0.025525) (39, 0.018140) (40, 0.012035) (41, 0.008115) 
(42, 0.005445) (43, 0.003080) (44, 0.002125) (45, 0.001200) (46, 0.000640) (47, 0.000375) 
(48, 0.000235) (49, 0.000140) (50, 0.000065) (51, 0.000005) (52, 0.000005) (53, 0.000000) 
(54, 0.000010) (55, 0.000000)
};
\end{axis}
\end{tikzpicture}
        \caption{26 point problem}
        \label{fig:flippables-26}
    \end{subfigure}

    \vspace{1em} 

    \begin{subfigure}[t]{0.48\linewidth}
        \centering
        \begin{tikzpicture}
\begin{axis}[
    width=1.2\linewidth,
    height=0.8\linewidth,
    ybar interval, 
    x tick label as interval=false,
    ymin=0,
    ymax=0.08,
    xmin=44,
    xmax=84,
    xtick={44,52,60,68,76,84},
    ytick=\empty,
    grid=major,
    axis lines*=none,
    axis y line=none,
    fill=blue!40,
    draw=blue!60!black
]
\addplot coordinates {
    (42, 0.000106) (43, 0.000213) (44, 0.000319) (45, 0.000638) (46, 0.000585) (47, 0.001595) 
(48, 0.002021) (49, 0.003563) (50, 0.005956) (51, 0.007125) (52, 0.010582) (53, 0.013879) 
(54, 0.021163) (55, 0.026800) (56, 0.035255) (57, 0.041051) (58, 0.046049) (59, 0.054238) 
(60, 0.059343) (61, 0.067106) (62, 0.069978) (63, 0.072158) (64, 0.070297) (65, 0.064182) 
(66, 0.062533) (67, 0.053972) (68, 0.046315) (69, 0.039509) (70, 0.030309) (71, 0.023875) 
(72, 0.020898) (73, 0.014570) (74, 0.010741) (75, 0.007817) (76, 0.005690) (77, 0.003244) 
(78, 0.002180) (79, 0.001542) (80, 0.001383) (81, 0.000425) (82, 0.000425) (83, 0.000106) 
(84, 0.000106) (85, 0.000106) (86, 0.000000) (87, 0.000053) (88, 0.000000)
};
\end{axis}
\end{tikzpicture}
        \caption{29 point problem}
        \label{fig:flippables-29}
    \end{subfigure}
    \hfill
    \begin{subfigure}[t]{0.48\linewidth}
        \centering
        \begin{tikzpicture}
\begin{axis}[
    width=1.2\linewidth,
    height=0.8\linewidth,
    ybar interval, 
    x tick label as interval=false,
    ymin=0,
    ymax=0.07,
    xmin=20,
    xmax=68,
    xtick={20,28,36,44,52,60,68},
    ytick=\empty,
    grid=major,
    axis lines*=none,
    axis y line=none,
    fill=blue!40,
    draw=blue!60!black
]
\addplot coordinates {
    (16, 0.000005) (17, 0.000010) (18, 0.000030) (19, 0.000025) (20, 0.000065) (21, 0.000085) 
(22, 0.000235) (23, 0.000435) (24, 0.000825) (25, 0.001150) (26, 0.001850) (27, 0.002975) 
(28, 0.003955) (29, 0.006140) (30, 0.008475) (31, 0.011895) (32, 0.016005) (33, 0.020620) 
(34, 0.026545) (35, 0.032188) (36, 0.038506) (37, 0.045151) (38, 0.050488) (39, 0.056271) 
(40, 0.059911) (41, 0.063428) (42, 0.062811) (43, 0.064396) (44, 0.061331) (45, 0.057446) 
(46, 0.053531) (47, 0.047266) (48, 0.041798) (49, 0.035561) (50, 0.029115) (51, 0.024025) 
(52, 0.019285) (53, 0.014980) (54, 0.011470) (55, 0.008955) (56, 0.006370) (57, 0.004745) 
(58, 0.003265) (59, 0.002160) (60, 0.001525) (61, 0.000970) (62, 0.000640) (63, 0.000475) 
(64, 0.000230) (65, 0.000185) (66, 0.000085) (67, 0.000030) (68, 0.000020) (69, 0.000025) 
(70, 0.000010) (71, 0.000005) (72, 0.000010) (73, 0.000010) (74, 0.000000)
};
\end{axis}
\end{tikzpicture}
        \caption{32 point problem}
        \label{fig:flippables-32}
    \end{subfigure}

    \caption{Number of flippable orientations}
    \label{fig:num-flippables}
\end{figure}

\subsection{Number of Flippable Orientations}

\Cref{fig:num-flippables} shows the distributions for the number of flippable orientations in abstract solutions for each of the four problems. Each distribution is roughly normally distributed with a mean of 35.9 for the 23 point problem (2.0\% of all triples), 30.7 for the 26 point problem (1.2\% of all triples), 63.0 for the 29 point problem (1.7\% of all triples), and 42.4 for the 32 point problem (0.9\% of all triples). The ratio of the number of flippable orientations to the number of triples of points was smaller for the 32 point problem than for the other three problems, which limits the effectiveness of omitting flippable orientations. This suggests that, besides having more points, the 32 point problem may have some additional problem-specific difficulty as well.

\subsection{Distribution of Number of Violations}

\Cref{fig:num-violations} shows the distributions for the number of violations for partial realizations (regardless of whether they are solutions) for each of the four problems. Each distribution is roughly normally distributed with a mean of 12.2 for the 23 point problem, 38.0 for the 26 point problem, 21.8 for the 29 point problem, and 121.6 for the 32 point problem. The data suggest that abstract solutions to the 26 and 32 point problems are more difficult to realize than abstract solutions for the 23 and 29 point problems.

\begin{figure}[ht]
    \centering
    \begin{subfigure}[b]{0.48\linewidth}
        \centering
        \begin{tikzpicture}
\begin{axis}[
    width=1.2\linewidth,
    height=0.8\linewidth,
    ybar interval, 
    x tick label as interval=false,
    ymin=0,
    ymax=0.105,
    xmin=0,
    xmax=30,
    xtick={0,5,10,15,20,25,30},
    ytick=\empty,
    grid=major,
    axis lines*=none,
    axis y line=none,
    fill=blue!40,
    draw=blue!60!black
]
\addplot coordinates {
    (0, 0.000450) (1, 0.001815) (2, 0.004961) (3, 0.009961) (4, 0.017552) (5, 0.027984)
    (6, 0.040425) (7, 0.052767) (8, 0.065054) (9, 0.075110) (10, 0.079971) (11, 0.084736)
    (12, 0.083381) (13, 0.078736) (14, 0.070440) (15, 0.063284) (16, 0.053287) (17, 0.043676)
    (18, 0.034995) (19, 0.027674) (20, 0.021853) (21, 0.016112) (22, 0.012152) (23, 0.009486)
    (24, 0.006541) (25, 0.004791) (26, 0.003405) (27, 0.002460) (28, 0.001855) (29, 0.001255)
    (30, 0.000925) (31, 0.000615) (32, 0.000415) (33, 0.000295) (34, 0.000240) (35, 0.000225)
    (36, 0.000095) (37, 0.000095) (38, 0.000055) (39, 0.000015) (40, 0.000055) (41, 0.000030)
    (42, 0.000030) (43, 0.000015) (44, 0.000005) (45, 0.000015) (46, 0.000010) (47, 0.000010)
    (48, 0.000005) (49, 0.000005) (51, 0.000010) (52, 0.000005) (53, 0.000005) (55, 0.000005)
    (58, 0.000005) (60, 0.000005) (61, 0.000010) (64, 0.000005) (65, 0.000005) (87, 0.000005)
};
\end{axis}
\end{tikzpicture}
        \caption{23 point problem}
        \label{fig:violations-23}
    \end{subfigure}
    \hfill
    \begin{subfigure}[b]{0.48\linewidth}
        \centering
        \begin{tikzpicture}
\begin{axis}[
    width=1.2\linewidth,
    height=0.8\linewidth,
    ybar interval, 
    x tick label as interval=false,
    ymin=0,
    ymax=0.042,
    xmin=0,
    xmax=80,
    xtick={0,10,20,30,40,50,60,70,80},
    ytick=\empty,
    grid=major,
    axis lines*=none,
    axis y line=none,
    fill=blue!40,
    draw=blue!60!black
]
\addplot coordinates {
    (1, 0.000012) (2, 0.000047) (3, 0.000089) (4, 0.000101) (5, 0.000279) (6, 0.000314) 
(7, 0.000474) (8, 0.000688) (9, 0.001103) (10, 0.001559) (11, 0.002034) (12, 0.002526) 
(13, 0.003338) (14, 0.004162) (15, 0.005401) (16, 0.006575) (17, 0.007713) (18, 0.009142) 
(19, 0.010957) (20, 0.013168) (21, 0.015403) (22, 0.017318) (23, 0.018949) (24, 0.021972) 
(25, 0.023621) (26, 0.025393) (27, 0.028144) (28, 0.030474) (29, 0.031470) (30, 0.033545) 
(31, 0.036622) (32, 0.037286) (33, 0.037358) (34, 0.038626) (35, 0.039196) (36, 0.039800) 
(37, 0.038858) (38, 0.039771) (39, 0.039213) (40, 0.038028) (41, 0.038039) (42, 0.035401) 
(43, 0.034909) (44, 0.032170) (45, 0.030711) (46, 0.029342) (47, 0.027468) (48, 0.025458) 
(49, 0.023739) (50, 0.022032) (51, 0.019962) (52, 0.018652) (53, 0.016749) (54, 0.015640) 
(55, 0.013796) (56, 0.011970) (57, 0.010713) (58, 0.009919) (59, 0.008603) (60, 0.007654) 
(61, 0.006830) (62, 0.006030) (63, 0.005383) (64, 0.004156) (65, 0.003806) (66, 0.003385) 
(67, 0.002929) (68, 0.002591) (69, 0.002253) (70, 0.001921) (71, 0.001453) (72, 0.001500) 
(73, 0.001411) (74, 0.001245) (75, 0.000806) (76, 0.000634) (77, 0.000729) (78, 0.000593) 
(79, 0.000391) (80, 0.000457) (81, 0.000314)
};
\end{axis}
\end{tikzpicture}
        \caption{26 point problem}
        \label{fig:violations-26}
    \end{subfigure}

    \vspace{1em} 

    \begin{subfigure}[t]{0.48\linewidth}
        \centering
        \begin{tikzpicture}
\begin{axis}[
    width=1.2\linewidth,
    height=0.8\linewidth,
    ybar interval, 
    x tick label as interval=false,
    ymin=0,
    ymax=0.065,
    xmin=0,
    xmax=50,
    xtick={0,10,20,30,40,50,60,70,80},
    ytick=\empty,
    grid=major,
    axis lines*=none,
    axis y line=none,
    fill=blue!40,
    draw=blue!60!black
]
\addplot coordinates {
    (2, 0.000106) (3, 0.000213) (4, 0.000373) (5, 0.000799) (6, 0.001970) (7, 0.002662) 
(8, 0.004685) (9, 0.007453) (10, 0.010700) (11, 0.017354) (12, 0.021773) (13, 0.029066) 
(14, 0.033591) (15, 0.042747) (16, 0.046367) (17, 0.052915) (18, 0.056588) (19, 0.058238) 
(20, 0.057546) (21, 0.058770) (22, 0.057706) (23, 0.056960) (24, 0.053074) (25, 0.049561) 
(26, 0.042907) (27, 0.036572) (28, 0.032792) (29, 0.028693) (30, 0.026617) (31, 0.020602) 
(32, 0.017195) (33, 0.014107) (34, 0.011658) (35, 0.010540) (36, 0.007612) (37, 0.006228) 
(38, 0.005270) (39, 0.004312) (40, 0.002608) (41, 0.002821) (42, 0.002129) (43, 0.001437) 
(44, 0.001171) (45, 0.000799) (46, 0.000745) (47, 0.000479) (48, 0.000319) (49, 0.000586) 
(50, 0.000426) (51, 0.000160) (52, 0.000373) (53, 0.000213) (54, 0.000000) (55, 0.000106) 
(56, 0.000000) (57, 0.000053) (58, 0.000000) (59, 0.000053) (60, 0.000000)
};
\end{axis}
\end{tikzpicture}
        \caption{29 point problem}
        \label{fig:violations-29}
    \end{subfigure}
    \hfill
    \begin{subfigure}[t]{0.48\linewidth}
        \centering
        \begin{tikzpicture}
\begin{axis}[
    width=1.2\linewidth,
    height=0.8\linewidth,
    ybar interval, 
    x tick label as interval=false,
    ymin=0,
    ymax=0.47,
    xmin=40,
    xmax=220,
    xtick={40,76,112,148,184,220},
    ytick=\empty,
    grid=major,
    axis lines*=none,
    axis y line=none,
    fill=blue!40,
    draw=blue!60!black
]
\addplot+[fill=blue!40,draw=none] coordinates {
    (17, 0.000137) (18, 0.000000) (19, 0.000000) (20, 0.000137) (21, 0.000000) (22, 0.000000) 
(23, 0.000000) (24, 0.000000) (25, 0.000000) (26, 0.000000) (27, 0.000000) (28, 0.000000) 
(29, 0.000137) (30, 0.000000) (31, 0.000274) (32, 0.000411) (33, 0.000137) (34, 0.000137) 
(35, 0.000547) (36, 0.000137) (37, 0.000684) (38, 0.001232) (39, 0.001368) (40, 0.001368) 
(41, 0.001232) (42, 0.001505) (43, 0.001779) (44, 0.002189) (45, 0.003831) (46, 0.003968) 
(47, 0.004789) (48, 0.004105) (49, 0.004516) (50, 0.004789) (51, 0.005747) (52, 0.006705) 
(53, 0.007252) (54, 0.008347) (55, 0.010673) (56, 0.010400) (57, 0.013273) (58, 0.014778) 
(59, 0.014778) (60, 0.017926) (61, 0.019841) (62, 0.021483) (63, 0.025041) (64, 0.026546) 
(65, 0.032704) (66, 0.034756) (67, 0.036262) (68, 0.045019) (69, 0.048303) (70, 0.047893) 
(71, 0.061166) (72, 0.063903) (73, 0.065134) (74, 0.071839) (75, 0.069923) (76, 0.082375) 
(77, 0.098522) (78, 0.092365) (79, 0.102354) (80, 0.110290) (81, 0.113574) (82, 0.129310) 
(83, 0.132594) (84, 0.138342) (85, 0.148331) (86, 0.160509) (87, 0.165572) (88, 0.178298) 
(89, 0.189929) (90, 0.194718) (91, 0.209770) (92, 0.222906) (93, 0.233032) (94, 0.253695) 
(95, 0.249316) (96, 0.259579) (97, 0.262589) (98, 0.287767) (99, 0.289683) (100, 0.308019) 
(101, 0.314587) (102, 0.327586) (103, 0.330049) (104, 0.346606) (105, 0.346880) (106, 0.354817) 
(107, 0.367542) (108, 0.373700) (109, 0.379721) (110, 0.380952) (111, 0.397373) (112, 0.411604) 
(113, 0.414888) (114, 0.414477) (115, 0.428845) (116, 0.425698) (117, 0.424603) (118, 0.413519) 
(119, 0.425835) (120, 0.432677) (121, 0.412288) (122, 0.436918) (123, 0.427887) (124, 0.447729) 
(125, 0.432950) (126, 0.407909) (127, 0.410099) (128, 0.417214) (129, 0.412698) (130, 0.418172) 
(131, 0.397236) (132, 0.395594) (133, 0.368911) (134, 0.382731) (135, 0.362616) (136, 0.360974) 
(137, 0.345785) (138, 0.339491) (139, 0.334702) (140, 0.311713) (141, 0.314039) (142, 0.286535) 
(143, 0.284072) (144, 0.285167) (145, 0.284072) (146, 0.271757) (147, 0.260400) (148, 0.246716) 
(149, 0.234948) (150, 0.234674) (151, 0.215654) (152, 0.206076) (153, 0.187192) (154, 0.188424) 
(155, 0.178982) (156, 0.171045) (157, 0.168719) (158, 0.155309) (159, 0.153941) (160, 0.128352) 
(161, 0.138889) (162, 0.123563) (163, 0.116037) (164, 0.105638) (165, 0.108922) (166, 0.091680) 
(167, 0.097291) (168, 0.088123) (169, 0.083333) (170, 0.067871) (171, 0.069376) (172, 0.062397) 
(173, 0.053229) (174, 0.057471) (175, 0.049398) (176, 0.051450) (177, 0.042693) (178, 0.039272) 
(179, 0.037630) (180, 0.035030) (181, 0.036398) (182, 0.028188) (183, 0.027915) (184, 0.027641) 
(185, 0.022578) (186, 0.021894) (187, 0.019704) (188, 0.018610) (189, 0.013410) (190, 0.015052) 
(191, 0.014368) (192, 0.015189) (193, 0.011494) (194, 0.012315) (195, 0.010673) (196, 0.009442) 
(197, 0.009852) (198, 0.008210) (199, 0.007937) (200, 0.006842) (201, 0.007800) (202, 0.005337) 
(203, 0.004789) (204, 0.005473) (205, 0.005473) (206, 0.004516) (207, 0.003558) (208, 0.003284) 
(209, 0.003831) (210, 0.003284) (211, 0.003284) (212, 0.002737) (213, 0.002189) (214, 0.001232) 
(215, 0.001095) (216, 0.003147) (217, 0.001779) (218, 0.001368) (219, 0.000958) (220, 0.001505) 
(221, 0.001232) (222, 0.001095) (223, 0.000958) (224, 0.000958) (225, 0.000821) (226, 0.000684) 
(227, 0.000821) (228, 0.000821) (229, 0.000547) (230, 0.000411) (231, 0.000821) (232, 0.000274) 
(233, 0.000547) (234, 0.000274) (235, 0.000411) (236, 0.000684) (237, 0.000547) (238, 0.000684) 
(239, 0.000137) (240, 0.000137) (241, 0.000137) (242, 0.000274) (243, 0.000411) (244, 0.000137) 
(245, 0.000000) (246, 0.000000) (247, 0.000137) (248, 0.000137) (249, 0.000000) (250, 0.000137) 
(251, 0.000000)
};
\end{axis}
\end{tikzpicture}
        \caption{32 point problem}
        \label{fig:violations-32}
    \end{subfigure}

    \caption{Number of violations}
    \label{fig:num-violations}
\end{figure}

\subsection{Solution With a Given Number of Violations}

\Cref{fig:sols-violations-prob,fig:sols-violations-prop} show the probability of being a solution versus the number of violations and the proportion of solutions with a given number of violations. Note that we only present the data for the 23 and 26 point problems, because the other problems had too few or no solutions. The data for the 26 point problem are noisy, because there are only 32 solutions, but some patterns are still evident. There is a roughly exponential relationship between the number of violations in a partial realization and the probability that it is a solution. For the 23 point problem, the modal number of violations for the partial realizations that were solutions is 1, and for the 26 point problem, the mode is 9. This demonstrates the importance of checking whether a partial realization is a solution even when the number of violations is nonzero.

\begin{figure}[ht]
    \centering
    \begin{subfigure}[b]{0.48\linewidth}
        \centering
        \begin{tikzpicture}
\begin{axis}[
    width=1.1\linewidth,
    height=0.8\linewidth,
    ymode=log,                
    log origin=infty,
    log basis y=10,           
    ybar interval, 
    ymin=5e-5,
    ymax=1,
    xmin=0,
    xmax=17,
    xtick={0,1,2,3,4,5,6,7,8,9,10,11,12,13,14,15,16,17},
    ytickten={-4, -3, -2, -1, 0},
    tick label style={font=\scriptsize},
    grid=major,
    axis lines*=none,
    fill=blue!40,
    draw=blue!60!black
]
\addplot coordinates {
    (0, 0.900000) (1, 0.236915) (2, 0.068548) (3, 0.031124) (4, 0.014530) (5, 0.005718)
    (6, 0.002103) (7, 0.001232) (8, 0.000384) (9, 0.000067) (10, 0.000125) (11, 0.000118)
    (12, 0.000060) (13, 0.000064) (14, 1e-10) (15, 1e-10) (16, 0.000094) (17, 1e-10)
};
\end{axis}
\end{tikzpicture}
        \caption{23 point problem}
        \label{fig:prob-23}
    \end{subfigure}
    \hfill
    \begin{subfigure}[b]{0.48\linewidth}
        \centering
        \begin{tikzpicture}
\begin{axis}[
    width=1.1\linewidth,
    height=0.8\linewidth,
    ymode=log,                
    log origin=infty,
    log basis y=10,           
    ybar interval, 
    x tick label as interval=false,
    ymin=1e-4,
    ymax=1,
    xmin=0,
    xmax=23,
    xtick={0,2,4,6,8,10,12,14,16,18,20,22},
    ytickten={-4, -3, -2, -1, 0},
    tick label style={font=\scriptsize},
    grid=major,
    axis lines*=none,
    fill=blue!40,
    draw=blue!60!black
]
\addplot coordinates {
    (1, 0.500000) (2, 0.375000) (3, 0.200000) (4, 0.117647) (5, 0.063830) (6, 0.056604) 
(7, 1e-10) (8, 0.017241) (9, 0.021505) (10, 0.007605) (11, 0.005831) (12, 1e-10) 
(13, 1e-10) (14, 0.004274) (15, 1e-10) (16, 1e-10) (17, 1e-10) (18, 0.000649) 
(19, 1e-10) (20, 0.000900) (21, 1e-10) (22, 0.000342) (23, 1e-10)
};
\end{axis}
\end{tikzpicture}
        \caption{26 point problem}
        \label{fig:prob-26}
    \end{subfigure}

    \caption{Probability of being a solution versus number of violations}
    \label{fig:sols-violations-prob}
\end{figure}

\begin{figure}[ht]
    \centering
    \begin{subfigure}[b]{0.48\linewidth}
        \centering
        \begin{tikzpicture}
\begin{axis}[
    width=1.1\linewidth,
    height=0.8\linewidth,
    ybar interval, 
    ymin=0,
    ymax=0.22,
    ytick={0,0.1,0.2},
    xmin=0,
    xmax=17,
    xtick={0,1,2,3,4,5,6,7,8,9,10,11,12,13,14,15,16,17},
    tick label style={font=\scriptsize},
    grid=major,
    axis lines*=none,
    fill=blue!40,
    draw=blue!60!black
]
\addplot coordinates {
    (0, 0.191489) (1, 0.203310) (2, 0.160757) (3, 0.146572) (4, 0.120567) (5, 0.075650)
    (6, 0.040189) (7, 0.030733) (8, 0.011820) (9, 0.002364) (10, 0.004728) (11, 0.004728)
    (12, 0.002364) (13, 0.002364) (14, 0.000000) (15, 0.000000) (16, 0.002364) (17, 0.000000)
};
\end{axis}
\end{tikzpicture}
        \caption{23 point problem}
        \label{fig:proportion-23}
    \end{subfigure}
    \hfill
    \begin{subfigure}[b]{0.48\linewidth}
        \centering
        \begin{tikzpicture}
\begin{axis}[
    width=1.1\linewidth,
    height=0.8\linewidth,
    ybar interval, 
    x tick label as interval=false,
    ymin=0,
    ymax=0.15,
    scaled y ticks=false,
    yticklabel style={/pgf/number format/fixed, /pgf/number format/precision=3},
    xmin=0,
    xmax=23,
    xtick={0,2,4,6,8,10,12,14,16,18,20,22},
    tick label style={font=\scriptsize},
    grid=major,
    axis lines*=none,
    fill=blue!40,
    draw=blue!60!black
]
\addplot coordinates {
    (1, 0.031250) (2, 0.093750) (3, 0.093750) (4, 0.062500) (5, 0.093750) (6, 0.093750) 
(7, 0.000000) (8, 0.062500) (9, 0.125000) (10, 0.062500) (11, 0.062500) (12, 0.000000) 
(13, 0.000000) (14, 0.093750) (15, 0.000000) (16, 0.000000) (17, 0.000000) (18, 0.031250) 
(19, 0.000000) (20, 0.062500) (21, 0.000000) (22, 0.031250) (23, 0.000000)
};
\end{axis}
\end{tikzpicture}
        \caption{26 point problem}
        \label{fig:proportion-26}
    \end{subfigure}

    \caption{Proportion of solutions with a given number of violations}
    \label{fig:sols-violations-prop}
\end{figure}

\section{Related Work}

Some previous work has applied SMT-style solvers to make progress on open mathematical problems. One paradigm is \emph{SAT+CAS}~\cite{satcas1,satcas2}, which combines SAT solvers with a computer algebra system (CAS) to solve problems involving symbolic computation. In this approach, partial candidate solutions are passed from the SAT solver to the CAS; if the CAS detects a problem with the candidate, a theory lemma is added to the SAT instance. This approach has been applied to study small cases of the Williamson conjecture~\cite{williamson}, a problem from design theory, and to improve the bounds on the minimum Kochen--Specker problem~\cite{kochen-specker}, a problem from quantum mechanics. Another SMT-style paradigm is \emph{satisfiability modulo symmetries} (SMS), in which a SAT solver passes partial candidate solutions to a program that checks if the candidate is in canonical form; if not, a theory lemma is generated. This approach has been applied to prove even tighter bounds on the minimum Kochen--Specker problem~\cite{sms-kochen} and verify small cases of the Murty--Simon conjecture in graph theory~\cite{sms}.

\section{Conclusion and future work} \label{sec-conclusion}
We introduced \textsf{PointSAT}, a program for finding constructions in discrete geometry by interfacing between a SAT solver and \textsf{Localizer}. We demonstrated the effectiveness of \textsf{PointSAT} by resolving an open problem about the existence of 23 points with no 6-hole or 7-gon; in fact, we can find thousands of witnessing point sets, while a previous approach was unable to find even one. Additionally, we showed that \textsf{PointSAT} is applicable to other problems in discrete geometry, which we hope will make it a useful tool for researchers in the area. The main conceptual insight, described in \Cref{sec-omit-flip}, is that solutions to these types of problems have many flippable orientations, and that ignoring these orientations when searching for a solution makes it much easier to find one. This insight has the potential to be useful for a variety of problems in discrete geometry, and its discovery is yet another success story for the application of SAT to experimental mathematics.

Despite the promising initial progress, there is still room for further improvements. For instance, \textsf{PointSAT} is still not competitive with Overmars' dedicated program for finding 29 points with no 6-hole, and it does not scale well to problems with more than 30 points. The biggest limitation of our approach seems to be that the SAT solver is still groping in the dark so to speak, finding abstract solutions without regard for how likely they are to be realizable. Figuring out how to overcome this limitation is a promising future direction.

More broadly, it may be fruitful to find other domains in which a methodology similar to the one described here could be applied. Specifically, we expect that similar approaches could be applicable to classes of problems in which there is an underapproximation that naturally encodes into SAT and a tool for testing whether a solution to the underapproximation is a true solution. Pursuing these lines of research has the potential to make new classes of mathematical problems amenable to computer-assisted discovery.

\begin{credits}
\subsubsection{\ackname} We thank Ruben Martins and Bernardo Subercaseaux for helpful discussions.
The research was supported by the National Science Foundation under grant DMS-2434625. Krapivin was supported by the NSF grant CNS2504471, Packard Foundation grant 2020-71730, and the Jeanne B. and Richard F. Berdik ARCS Pittsburgh Endowed Scholar Award. Przybocki was supported by the NSF Graduate Research Fellowship Program under Grant No. DGE-2140739.

\subsubsection{\discintname}
The authors have no competing interests to declare that are
relevant to the content of this article.
\end{credits}

\bibliographystyle{splncs04}
\bibliography{bib}
\end{document}